\documentclass[journal]{IEEEtran}

\usepackage{amsmath,amssymb,amsfonts}
\usepackage{amsthm}
\usepackage{graphicx}
\usepackage{booktabs}
\usepackage{array}
\usepackage{cite}
\usepackage{url}
\usepackage{balance}
\usepackage{tikz}
\usetikzlibrary{arrows.meta,positioning,calc,decorations.pathmorphing,patterns}

\graphicspath{{figures/}}

\tikzset{
  tip/.style={circle,draw,fill=black,minimum size=3.4pt,inner sep=0pt},
  sig/.style={-{Latex[length=2mm]},line width=0.6pt},
  dist/.style={-{Latex[length=2mm]},red!75,line width=0.9pt}
}

\newtheorem{theorem}{Theorem}

\newtheorem{proposition}{Proposition}
\newtheorem{remark}{Remark}

\begin{document}

\title{Interaction Dynamics Modeling and Predictive Control for Safe Steerable Catheter--Tissue Interaction}

\author{Yongyan~Cao%
\thanks{Preprint, June 2026. This work has been submitted for possible publication; copyright may be transferred without notice.}%
\thanks{Y.~Cao is with Voryx Robotic LLC, San Jose, CA 95136, USA. Corresponding author: Yongyan Cao (e-mail: yongyancao@gmail.com).}%
}

\markboth{Preprint}%
{Cao: Interaction Dynamics Modeling and Predictive Control for Safe Steerable Catheter--Tissue Interaction}

\maketitle

\begin{abstract}
Safe steerable catheter control is fundamentally a problem of interaction dynamics: the tip must follow a planned motion, remain compliant against moving tissue, reject friction and hysteresis, and respect a clinically meaningful never-exceed contact-force bound. We formulate catheter--tissue interaction dynamics in the scalar tip-normal coordinate of a single-segment single-tendon catheter. A partial-physics feedforward cancels only the reliable nominal bending dynamics, exposing a configuration-invariant linear interaction-dynamics model whose input gain varies through the scalar catheter inertia. A predictive optimizer then regulates this interaction state subject to hard contact-force, tendon-force, and curvature constraints. An augmented Kalman filter compresses contact, friction, and modeling error into one sensor-free disturbance state, giving nominal offset-free regulation in free space while leaving force safety to the explicit constraint. The unconstrained and disturbance-free limit recovers classical catheter impedance as a special realization of the same interaction dynamics, rather than as the main design object. In a MuJoCo distributed-compliance simulation of an eight-link tendon-driven catheter, disturbance augmentation cuts free-space approach error by 90\%, and only the force-constrained predictive interaction-dynamics controller reconciles tracking with the 0.5\,N bound: the unconstrained controller drives contact force to 0.60\,N against a penetrating target, while the constrained one holds 0.47\,N at identical tracking. These results show that offset-free motion regulation and contact-force safety are coupled interaction-dynamics objectives, and that the explicit predictive constraint resolves their tension under stiff tissue contact. The bound also holds under $0.5$\,mm, $1.2$\,Hz cardiac motion. Hardware validation is future work.
\end{abstract}

\begin{IEEEkeywords}
Steerable catheter, continuum robot, interaction dynamics, predictive control, Kalman filter, contact-force safety, offset-free regulation, cardiac ablation.
\end{IEEEkeywords}

\section{Introduction}

\subsection{Clinical Motivation}

\IEEEPARstart{S}{teerable} catheters are the primary tool for cardiac electrophysiology (EP) procedures including radiofrequency ablation, where the tip must be positioned precisely at target tissue while maintaining controlled, stable contact. The central control problem is therefore not merely tip tracking and not merely force regulation; it is the regulation of \emph{catheter--tissue interaction dynamics}. The interaction state must encode how the tip moves relative to tissue, how persistent friction and contact forces bias that motion, and how safety limits reshape what motion is physically allowable.

Existing methods regulate these interaction dynamics through different mechanisms. Classical impedance control~\cite{hogan1985} shapes the tip port as a virtual mechanical impedance $Z(s) = M_d s^2 + D_d s + K_d$, providing passive compliance without an explicit contact model. It is an important interaction-dynamics realization, but three limitations matter clinically: \textbf{(i)}~the contact-force safety bound $\|F_{\text{tip}}\| \le F_{\text{safe}}$ is not enforced as a prediction-horizon constraint; \textbf{(ii)}~persistent contact force produces a steady-state tip error $e_\infty = K_d^{-1} F_{\text{contact}}$; and \textbf{(iii)}~the controller cannot anticipate trajectory curvature, impending contact, or force saturation.

\subsection{Challenges Specific to Catheters}

Catheters present challenges absent in rigid-body robots. \emph{Model uncertainty:} the mechanics follow a variant of the Cosserat rod equations~\cite{antman1995,rucker2011}, with distributed elasticity, sheath friction, tendon backlash, and hysteresis that vary widely across specimens, making full-model inversion impractical. \emph{Cardiac motion:} the heart moves ${\sim}10$--15\,mm per cycle at ${\sim}1$\,Hz; after gross target tracking, the local residual wall-normal motion still produces quasi-periodic contact-force disturbances with known frequency but unknown phase and amplitude. \emph{Limited sensing:} most clinical catheters carry no force sensor; tip position is available via electromagnetic (EM) tracking, fluoroscopy, or intracardiac echocardiography at moderate latency (10--50\,ms). \emph{Safety criticality:} a perforation force threshold of ${\sim}0.3$--0.5\,N is clinically relevant~\cite{yokoyama2008}, making hard force-constraint enforcement---not soft penalization---essential.

\subsection{Related Work}

\emph{Catheter and continuum-robot control} has progressed from PID and Jacobian-based kinematic regulation~\cite{camarillo2008,yip2014} to model-less feedback~\cite{yip2014} and predictive interaction-dynamics optimization. Cosserat-based predictive control can model rich catheter dynamics but requires online nonlinear-program solution, limiting update rates to ${\sim}10$--20\,Hz. Constant-curvature kinematics~\cite{webster2010} underlie most reduced-order interaction models. Learning-based approaches improve robustness to model uncertainty but usually do not provide formal stability or hard interaction-constraint guarantees.

\emph{Sensor-free contact-force control} is a directly relevant line. Jolaei \emph{et al.}~\cite{jolaei2020} regulate the contact force of a tendon-driven ablation catheter \emph{without a force sensor} via position control plus a displacement-based viscoelastic contact model (0.03--0.05\,N RMS), and Kesner and Howe~\cite{kesner2014} combine ultrasound guidance with force control on a moving cardiac target (${\sim}0.08$\,N RMS). These works regulate one component of the interaction dynamics---force---to a setpoint. The present framework instead treats force as a hard never-exceed interaction-dynamics constraint while regulating the tip-motion state, a complementary objective (Section~\ref{sec:compare}).

\emph{Virtual-impedance interaction shaping} has been demonstrated for catheter tip-force regulation with empirically tuned parameters. In the terminology of this paper, impedance is one way to realize interaction dynamics, but the configuration-dependent effective inertia and compliance are typically not addressed, giving inconsistent behavior across the workspace.

\emph{Disturbance rejection in flexible robots} can also be interpreted as interaction-dynamics regulation. Active disturbance rejection control (ADRC) and its extended state observer~\cite{han2009} lump unknown dynamics into an estimated disturbance state. The connection to offset-free predictive control~\cite{pannocchia2003,maeder2009} provides a formal framework for combining disturbance estimation with constraint-aware interaction optimization~\cite{rawlings2017}.

\emph{Base framework.} This work builds on the interaction-dynamics framework introduced for redundant-manipulator physical human--robot interaction~\cite{cao2026phri}: nonlinear robot dynamics are transformed into a configuration-invariant linear interaction model, then regulated by predictive optimization with disturbance augmentation and safety constraints. We carry that hierarchy---Interaction Dynamics $\rightarrow$ Configuration-Invariant Dynamics $\rightarrow$ Predictive Optimization---to the steerable-catheter setting, addressing the catheter-specific challenges of single-DOF actuation, severe model uncertainty, and hard contact-force safety.

The contributions of this paper are:

\textbf{Correct single-DOF formulation.} A single-tendon catheter has one controllable DOF (the curvature); we control the scalar 
tip-normal coordinate, avoiding the common over-parameterization that treats the 2-D tip as independently actuable.

\textbf{Configuration-invariant interaction dynamics.} The partial-physics feedforward cancels only the \emph{known} 
nominal stiffness and damping, reformulating the uncertain catheter mechanics into a configuration-invariant linear interaction-dynamics model. The double integrator is the resulting model, not the contribution itself; the residual is absorbed by the disturbance state. Under the \emph{Disturbance Compression Principle} 
(Remark~\ref{rmk:compress}) the estimator tracks only the observer-accessible low-frequency projection $\mathcal{P}_{\mathcal D}(\Delta)$ 
of the residual, so $\dot d=0$ is an equivalent-disturbance model, not a physical claim---making one integrator state legitimate across 
friction, contact, and cardiac effects.

\textbf{Predictive interaction-dynamics optimization.} We establish (Theorem~\ref{thm:equiv}) that the unconstrained, disturbance-free realization recovers the classical catheter impedance law while the constrained predictive realization adds offset-free rejection and explicit interaction-constraint enforcement.

\textbf{Hard interaction-dynamics constraints.} The QP enforces the predicted tip-normal force bound over the horizon through 
the tendon-to-tip transmission, together with tendon and curvature limits.

\textbf{Honest characterization.} Simulation on a distributed-compliance physics plant shows prediction's dominant benefit is 
\emph{interaction-constraint enforcement}, that offset-free rejection helps mainly in free space, and that offset-free motion regulation \emph{without} 
a force constraint still breaches the safety bound (far milder on the compliant catheter than a stiff lumped model predicts, but breached).

\section{Interaction Dynamics Formulation}\label{sec:model}

\begin{figure}[t]
\centering
\resizebox{\columnwidth}{!}{%
\begin{tikzpicture}[scale=1.0,every node/.style={font=\footnotesize}]
  \node[draw,fill=gray!15,minimum width=9mm,minimum height=8mm] (act) at (-2.85,0) {};
  \node[align=center,font=\scriptsize] at (-2.85,0) {tendon\\actuator};
  \draw[sig] (-2.3,0.32)--(-1.6,0.32);
  \node[above,font=\scriptsize] at (-1.95,0.34) {disp.\ $u$};
  \draw[line width=3pt,gray!70,line cap=round] (-2.4,0)--(0,0);
  \draw[line width=3pt,gray!70,line cap=round] (0,0) arc (270:360:1.8);
  \node[below,font=\scriptsize] at (-1.2,-0.12) {proximal sheath};
  \draw[line width=0.9pt] (0,0.2) arc (270:360:1.6);
  \node[font=\scriptsize,align=left] at (0.95,0.55) {tendon\\(force, $u$)};
  \draw[sig,gray] (2.55,0.55)--(1.45,0.62);
  \node[right,font=\scriptsize,align=left] at (2.5,0.55) {bending segment\\ $L,\;\kappa$; \ $M,C,K$};
  \node[font=\scriptsize] at (0,1.8) {$+$};
  \draw[dashed] (0,1.8)--(1.8,1.8);
  \draw[dashed] (0,1.8)--(0,0);
  \node[above,font=\scriptsize] at (0.75,1.84) {$R=1/\kappa$};
  \draw[-{Latex[length=1.3mm]}] (0,1.0) arc (270:355:0.8);
  \node[font=\scriptsize] at (0.42,1.28) {$\kappa L$};
  \node[tip] (T) at (1.8,1.8){};
  \node[right,font=\scriptsize,align=left] at (2.02,1.55) {tip $p$,\\ $y=n^\top p$};
  \draw[sig] (1.5,1.9)--(1.5,2.42);
  \node[right,font=\scriptsize] at (1.54,2.27) {$n$};
  \fill[pattern=north east lines,pattern color=gray] (0.6,2.45) rectangle (3.0,2.67);
  \draw[line width=0.8pt] (0.6,2.45)--(3.0,2.45);
  \node[above,font=\scriptsize] at (1.8,2.69) {tissue wall ($k_t,b_t$)};
  \draw[dist] (2.2,2.43)--(2.2,1.9);
  \node[red!75,right,font=\scriptsize] at (2.25,2.2) {$F_n\le F_\text{safe}$};
  \node[font=\scriptsize,align=center] at (-0.6,-0.62)
        {friction, backlash, hysteresis, contact $\Rightarrow d_\text{cat}$};
\end{tikzpicture}}
\caption{Single-segment tendon-actuated catheter plant, mapping the variables
of this section onto the physical parts: tendon displacement $u$ (control
input) drives the curvature $\kappa$ of a length-$L$ bending segment with
effective bending inertia $M$, damping $C$, and hysteretic stiffness $K$
\eqref{eq:plant}; the constant-curvature arc has radius $R=1/\kappa$ and
subtends $\kappa L$. The tip position $p$ and its controlled normal coordinate
$y=n^\top p$ press on a Kelvin--Voigt tissue wall ($k_t,b_t$), producing the
normal contact force $F_n$ held under the hard safety bound $F_\text{safe}$.
Sheath friction, tendon backlash, hysteresis, and contact are lumped into the
disturbance $d_\text{cat}$, compressed by the augmented Kalman estimator into
the observed state $d$ (Remark~\ref{rmk:compress}).}
\label{fig:plant}
\end{figure}

\subsection{Catheter Mechanics and Degrees of Freedom}

Consider a planar single-segment tendon-actuated catheter; tip position $p \in \mathbb{R}^2$ is set by tendon displacement $u \in \mathbb{R}$ (positive pull gives curvature $\kappa > 0$). The dominant bending dynamics take the rigid-body-analogous form
\begin{equation}
M(\kappa)\ddot{\kappa} + C(\kappa,\dot\kappa)\dot{\kappa} + K(\kappa) = u + d_{\text{cat}},
\label{eq:plant}
\end{equation}
with effective bending inertia $M$, damping $C$, nonlinear (hysteretic) stiffness $K$, and a lumped term $d_{\text{cat}}$ collecting tissue contact, friction, backlash, and hysteresis. This equation is not used as a full Cosserat model; it is the starting point for constructing a low-dimensional interaction-dynamics state. The correspondence $\kappa\leftrightarrow q$, $K\leftrightarrow G$ bridges to the configuration-invariant interaction-dynamics framework established for redundant manipulators in physical human--robot interaction~\cite{cao2026phri}; the present paper specializes that hierarchy to the single-DOF catheter setting.

\emph{Degrees of freedom.} With a single tendon, the only controllable coordinate is $\kappa$; the tip pose $p(\kappa)$ traces a one-parameter curve, so the two tip coordinates are \emph{not} independently actuable---only motion along the Jacobian direction $J_\kappa$ is. We therefore control a scalar coordinate (the curvature, or equivalently the tip displacement $y$ along the contact normal) and recover the 2-D pose through the kinematics of Section~\ref{sec:kin}.

\begin{remark}[Disturbance Compression Principle]\label{rmk:compress}
The true residual after partial-physics cancellation, $\Delta(t)=f_{\text{real}}-f_{\text{nominal}}$, superposes effects on disparate timescales---quasi-static sheath friction and hysteresis, fast tissue contact, and ${\sim}1$\,Hz quasi-periodic cardiac loading. We do \emph{not} claim a single integrator physically \emph{represents} all of these. Rather, the augmented estimator realizes the projection
\begin{equation}
d \;=\; \mathcal{P}_{\mathcal D}(\Delta),
\label{eq:compress}
\end{equation}
onto the \emph{observer-accessible} subspace $\mathcal D$---the low-frequency band set by the Kalman bandwidth and the integrating internal model. The model $\dot d=0$ is thus a statement about $\mathcal{P}_{\mathcal D}(\Delta)$, the \emph{equivalent} disturbance the observer can track and the MPC reject offset-free, not about the physics of $\Delta$. Components \emph{outside} $\mathcal D$---the fast contact-onset transient, an unpredicted cardiac harmonic---are handled instead by the hard force constraint (\S\ref{sec:verify}(4)) or an enlarged internal model (\S\ref{sec:ddot}, Option~B). This makes ``model only what is observed and slowly varying'' a precise design rule: partial cancellation need only move the dominant part of $\Delta$ into $\mathcal D$, and the offset-free guarantee then applies to $\mathcal{P}_{\mathcal D}(\Delta)$ exactly. We call this the \emph{Disturbance Compression Principle}; it lets a deliberately reduced model \eqref{eq:plant} of an infinite-dimensional Cosserat system carry a formal offset-free guarantee.
\end{remark}

\subsection{Tip Kinematics}\label{sec:kin}

For a constant-curvature arc of length $L$,
\begin{equation}
p_x = \frac{\sin(\kappa L)}{\kappa},\qquad p_z = \frac{1-\cos(\kappa L)}{\kappa},
\label{eq:kin}
\end{equation}
with continuous limits $p_x\to L$, $p_z\to0$ as $\kappa\to0$. Let $J_\kappa(\kappa)=dp/d\kappa\in\mathbb{R}^{2\times1}$ be the translational Jacobian and let $n$ be the controlled tip-normal direction. The scalar normal coordinate is $y=n^\top p$, with
\begin{align}
J_n(\kappa)
&=\frac{dy}{d\kappa}=n^\top J_\kappa(\kappa),\\
\Lambda_n(\kappa)
&=\left(J_n(\kappa)M^{-1}(\kappa)J_n(\kappa)\right)^{-1}>0,
\label{eq:JnLambda}
\end{align}
away from configurations where $J_n=0$. This scalar operational-space inertia normalizes the effective tip impedance in the controllable direction.

\subsection{Contact Model and the Force-Safety Interaction Constraint}

During contact, a tip force $F_{\text{tip}}$ acts at the tip; in simulation the tissue is Kelvin--Voigt, $F_{\text{tissue}} = k_t\delta + b_t\dot\delta$, with penetration $\delta=\max(0,p_{\text{surf}}-p)$, tissue stiffness $k_t\approx 2$--20\,kN/m, and damping $b_t$. The contact maps to curvature through virtual work, $\tau_{\text{ext}}=J_\kappa^\top F_{\text{tip}}$; along the controlled normal this reduces to $\tau_{\text{ext}}=J_n F_n$, where $F_n=n^\top F_{\text{tip}}$. The clinically relevant perforation threshold $F_{\text{safe}}\approx 0.5$\,N~\cite{yokoyama2008} motivates a hard predicted normal-force bound, $|F_n(t)|\le F_{\text{safe}}$, with additional approach-velocity limits needed for environment-induced impact transients (Section~\ref{sec:verify}).

\subsection{Interaction-Dynamics Objective}

Given a reference $(p_d,\dot p_d,\ddot p_d)$, define the scalar interaction state $x=[e,\dot e]^\top$, where $e=y_d-y$ is the error in the controllable normal coordinate. This state characterizes the regulated catheter--tissue interaction dynamics: motion error, velocity error, persistent unknown loading, and the constraints that limit allowable contact. The objective is to choose $u(t)$ so $e(t)\to0$ in the nominal free-space limit while respecting tendon limits $u_{\text{lo}}\le u\le u_{\text{hi}}$, predicted normal-force safety $|F_n|\le F_{\text{safe}}$, and curvature limits $\kappa_{\text{lo}}\le\kappa\le\kappa_{\max}$ under bounded unknown contact, friction, and hysteresis.

\section{Predictive Interaction Dynamics Control}\label{sec:design}

The control architecture has two layers:
\begin{equation}
u = \underbrace{u_{\text{ff}}}_{\text{Layer 1: interaction-dynamics normalization}} + \underbrace{J_n F_{\text{mpc}}}_{\text{Layer 2: predictive correction}},
\label{eq:twolayer}
\end{equation}
with feedforward $u_{\text{ff}} = \hat C(\kappa,\dot\kappa)\dot\kappa + \hat K(\kappa) + J_n\Lambda_n(\kappa)\ddot y_d$ canceling the \emph{nominal} dynamics. In the single-DOF setting $F_{\text{mpc}}\in\mathbb{R}$ is the scalar corrective force along the controlled tip-normal; the transmitted generalized input is $J_nF_{\text{mpc}}$, dimensionally consistent with $u$. In the remainder we write $J_n$ for the scalar normal Jacobian and reserve $J_\kappa$ for the full translational Jacobian. For a rigid robot Layer~1 is exact; for a catheter $\hat C,\hat K$ are uncertain, so the cancellation is partial and the residual is absorbed by the disturbance estimate---more robust than inverting an unreliable model.

Since the system is single-DOF (Section~\ref{sec:model}), the error $e=y_d-y$ and the corrective input $F_{\text{mpc}}\in\mathbb{R}$ are scalar. With $x_e=[e,\dot e]^\top\in\mathbb{R}^2$,
\begin{equation}
\dot x_e = \underbrace{\begin{bmatrix}0&1\\0&0\end{bmatrix}}_{A_c\,(\text{const})} x_e + \underbrace{\begin{bmatrix}0\\-\Lambda_n^{-1}(\kappa)\end{bmatrix}}_{B_c(\kappa)} F_{\text{mpc}} + \underbrace{\begin{bmatrix}0\\1\end{bmatrix}}_{E_c\,(\text{const})} d(t),
\label{eq:errdyn}
\end{equation}
where $d$ lumps modeling error, contact, and the kinematic coupling $\dot J_n\dot\kappa$, normalized to acceleration units. This is the key configuration-invariant interaction-dynamics model: $A_c$ is \textbf{constant}, while only the scalar gain $B_c(\kappa)$ varies through $\Lambda_n(\kappa)$. The double integrator is the resulting normalized interaction model, not the main claim. The same constant-$A_d$, parameter-varying-$B_d$ structure exploited in the base framework~\cite{cao2026phri} and, earlier, in min--max MPC under input saturation~\cite{cao2005minmax}, enables offline precomputation of the prediction matrices. Because $A_c$ is nilpotent, the exact ZOH discretization is closed-form:
\begin{equation}
A_d=\begin{bmatrix}1&\Delta t\\0&1\end{bmatrix},\quad B_d(\kappa_k)=\begin{bmatrix}-\tfrac12\Lambda_n^{-1}(\kappa_k)\Delta t^2\\-\Lambda_n^{-1}(\kappa_k)\Delta t\end{bmatrix}.
\label{eq:disc}
\end{equation}

Augment with a scalar integrating disturbance $\hat d\in\mathbb{R}$:
\begin{equation}
\begin{bmatrix}x_{e,k+1}\\\hat d_{k+1}\end{bmatrix}=\begin{bmatrix}A_d&G_d\\0&1\end{bmatrix}\begin{bmatrix}x_{e,k}\\\hat d_k\end{bmatrix}+\begin{bmatrix}B_d(\kappa_k)\\0\end{bmatrix}F_{\text{mpc},k},
\label{eq:aug}
\end{equation}
with $G_d=[\tfrac12\Delta t^2,\Delta t]^\top$ the ZOH of $E_c$---\textbf{distinct from} $B_d$, which carries the $-\Lambda_n^{-1}$ input gain ($d$ is in acceleration units, so $G_d$ is not $\Lambda_n^{-1}$-scaled). A steady-state Kalman filter estimates $\hat d$ from tip-error measurements; the pair is observable with $C_{\text{aug}}=[I_2,0]$ since $G_d\neq 0$. No force sensor is required.

A predictive interaction-dynamics QP can be constructed from the augmented dynamics \eqref{eq:aug} and an input-centered cost. For a frozen or predicted inertia sequence, let
$U_d(\hat d)=[\Lambda_{n,0}\hat d,\ldots,\Lambda_{n,N-1}\hat d]^\top$ be the steady force sequence that cancels the estimated acceleration disturbance in \eqref{eq:errdyn}. The condensed QP is
\begin{equation}
\min_{U}\ \tfrac12 U^\top H\,U
+ \Big(\Gamma^\top\bar Q\big(\Phi x_{e,k}+\Delta(\hat d)\big)-\bar R U_d(\hat d)\Big)^\top U,
\label{eq:qp}
\end{equation}
with $U=[F_{\text{mpc},0};\ldots;F_{\text{mpc},N-1}]\in\mathbb{R}^N$, $H=\Gamma^\top\bar Q\Gamma+\bar R$, $Q=\text{diag}(K_d,D_d)$, and $\Phi\in\mathbb{R}^{2N\times2}$, $\Gamma$ precomputed once (constant $A_d$). When no interaction constraint is active the solution is the closed form
\begin{equation}
U^\star=-H^{-1}\!\left[\Gamma^\top\bar Q\big(\Phi x_e+\Delta(\hat d)\big)-\bar R U_d(\hat d)\right],
\label{eq:unconstrained_centered}
\end{equation}
a matrix--vector multiply. With constant $\Lambda_n$ over the horizon, $\Delta(\hat d)+\Gamma U_d(\hat d)=0$, so the change of variables $V=U-U_d$ reduces the optimizer to the nominal interaction-dynamics regulator in $V$. When a constraint binds, OSQP~\cite{osqp} solves the condensed QP reusing the cached factorization. The minimization is subject to tendon limits, curvature limits, and the contact-force interaction bound.

\emph{Force-safety interaction constraint.} The corrective input $F_{\text{mpc}}$ is the tip-normal force, and the two-layer command is realized as the tendon tension $T_k=-(u_{\text{ff},k}+F_{\text{mpc},k})/J_n(\kappa_k)$ with the tip-space elastic feedforward $u_{\text{ff}}=k_{\text{eff}}\,y_d$ ($J_n$ bounded away from zero; Section~\ref{sec:compare}). The predicted normal contact force---the applied tip force beyond the catheter's own elastic restoring---is then
\begin{equation}
\hat F_{n,k}=J_n(\kappa_k)\,|T_k|-k_{\text{eff}}\,|y_k|=k_{\text{eff}}\,e_k+F_{\text{mpc},k},
\label{eq:force_map}
\end{equation}
with any calibrated free-space friction bias absorbed into the disturbance estimate $\hat d$ (Remark~\ref{rmk:compress}). Because the compliant elastic term $k_{\text{eff}}e_k$ is small over the workspace ($|k_{\text{eff}}e_k|\ll F_{\text{safe}}$), the hard predicted-force bound $|\hat F_{n,k}|\le F_{\text{safe}}$ is enforced to leading order by the box constraint on the corrective force,
\begin{equation}
-F_{\text{safe}}\le F_{\text{mpc},k}\le F_{\text{safe}},
\label{eq:force_constraint}
\end{equation}
applied directly in the QP. This bounds the controller-induced/predicted quasi-static normal force; fast environment-induced damping spikes are handled by approach-velocity limits (Section~\ref{sec:verify}).

\begin{theorem}[LQR--impedance equivalence]\label{thm:equiv}
In the unconstrained, disturbance-free limit ($d\equiv0$, no active constraints) the receding-horizon law is the static linear state feedback $F_{\text{mpc}}=K_{\text{eff}}\,e+D_{\text{eff}}\,\dot e$---the LQR-realized instance of the classical catheter impedance law $u_{\text{imp}}=\hat C\dot\kappa+\hat K+J_n(K_{\text{eff}}e+D_{\text{eff}}\dot e)$, realizing the effective tip-normal impedance $Z_{\text{eff}}(s,\kappa)\approx\Lambda_n(\kappa)s^2+D_{\text{eff}}s+K_{\text{eff}}$---where the \emph{realized} gains $(K_{\text{eff}},D_{\text{eff}})$ are the unconstrained LQR feedback for the weights $(Q,R)$. The unconstrained predictive interaction controller is therefore an LQR-tuned classical impedance, with the realized gains the Riccati image of the cost weights (Remark~\ref{rmk:realized}); it is \emph{not} a free rendering of a prescribed $(K_d,D_d)$ (following~\cite{cao2026phri}, Theorem~1).
\end{theorem}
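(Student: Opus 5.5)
Setting $\hat d=0$ removes the affine terms from the condensed QP \eqref{eq:qp}, and without active constraints the minimizer is given in closed form by \eqref{eq:unconstrained_centered}. My plan is to read the static gain directly off this expression and then interpret it through the error dynamics \eqref{eq:errdyn}.

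\textbf{Step 1 (static linear law).} With $d\equiv0$ we have $\hat d=0$, so $\Delta(0)=0$ and $U_d(0)=0$. Then \eqref{eq:unconstrained_centered} reduces to $U^\star=-H^{-1}\Gamma^\top\bar Q\Phi\,x_{e,k}$. Under the receding-horizon policy only the first block is applied, $F_{\text{mpc},k}=-[1,0,\ldots,0]H^{-1}\Gamma^\top\bar Q\Phi\,x_{e,k}=:[K_{\text{eff}},\,D_{\text{eff}}]\,x_{e,k}$. Since $\Phi$ and $\Gamma$ depend only on $(A_d,B_d(\kappa_k))$ and on the weights, and $A_d$ is constant, this is a static linear state feedback $F_{\text{mpc}}=K_{\text{eff}}e+D_{\text{eff}}\dot e$. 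The gains are frozen at the current $\Lambda_n(\kappa_k)$, so they are configuration-scheduled but static in $x_e$.

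\textbf{Step 2 (identification with LQR).} I would choose the terminal weight in $\bar Q$ to be the stabilizing DARE solution $P$ for $(A_d,B_d,Q,R)$, which is the standard MPC setup. The pair $(A_d,B_d)$ is controllable because $\Lambda_n^{-1}\neq0$ and the double integrator is controllable. With this terminal weight, dynamic programming shows the finite-horizon optimum equals the infinite-horizon LQR optimum, so the first move is $-(R+B_d^\top PB_d)^{-1}B_d^\top PA_d\,x_e$. If the paper instead uses a plain finite-horizon $\bar Q$, I would state the result as convergence of the gains to the LQR gains as $N\to\infty$. The sign convention $B_c=[0,-\Lambda_n^{-1}]^\top$ makes $K_{\text{eff}},D_{\text{eff}}>0$. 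These gains are the Riccati image of $(Q,R)=(\mathrm{diag}(K_d,D_d),R)$ rather than $(K_d,D_d)$ themselves, which gives the final ``not a free rendering'' clause.

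\textbf{Step 3 (impedance interpretation).} I would substitute the law into \eqref{eq:twolayer}. With exact nominal cancellation, $u=\hat C\dot\kappa+\hat K+J_n\Lambda_n\ddot y_d+J_n(K_{\text{eff}}e+D_{\text{eff}}\dot e)$, which is $u_{\text{imp}}$ plus the reference-acceleration feedforward. Next I would add an external normal force $F_n$, entering as $\tau_{\text{ext}}=J_nF_n$, and project onto the normal through $\Lambda_n$. This yields, up to the neglected $\dot J_n\dot\kappa$ term, the closed-loop tip equation $\Lambda_n\ddot e+D_{\text{eff}}\dot e+K_{\text{eff}}e\approx F_n$. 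That equation gives $Z_{\text{eff}}(s,\kappa)$, and the ``$\approx$'' records the neglected coupling term together with the ZOH-versus-continuous difference.

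\textbf{Main obstacle.} The hard part is Step 2: making the receding-horizon first move equal the LQR gain exactly rather than approximately. This hinges on the terminal cost, and also on keeping $\Lambda_n$ frozen over the horizon so that $B_d$ is constant. I would handle it by stating explicitly that the terminal weight is the DARE solution and $\Lambda_n$ is frozen, as in \cite{cao2026phri}, Theorem~1. Otherwise I would fall back on the $N\to\infty$ limit.
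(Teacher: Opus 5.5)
Your proposal is correct and follows the paper's proof sketch essentially step for step. The $\hat d=0$ closed-form minimizer's first block gives the static gain, the DARE terminal cost makes it the $N$-independent LQR feedback (hence the Riccati-image clause), and multiplying the residual plant $\ddot e=-\Lambda_n^{-1}F_{\text{mpc}}+d$ by $\Lambda_n$ yields the impedance. Your added care is well placed: you require the terminal DARE to use the same frozen $B_d(\kappa_k)$ as the prediction, and you note that $u_{\text{ff}}+J_nF_{\text{mpc}}$ is $u_{\text{imp}}$ plus $J_n\Lambda_n\ddot y_d$. Just fix the sign in Step~3: with $e=y_d-y$ and $\tau_{\text{ext}}=J_nF_n$ the right-hand side is $-F_n$ (equivalently $+F_n$ in the coordinate $y-y_d$), which leaves $Z_{\text{eff}}$ unchanged.
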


\begin{IEEEproof}[Proof sketch]
Without constraints and with $d\equiv0$ the QP minimizer is the unconstrained stationary point $U^\star=-H^{-1}\Gamma^\top\bar Q\,\Phi x_e$, whose first block is a static gain $F_{\text{mpc}}=K_{\text{eff}}e+D_{\text{eff}}\dot e$; with the DARE terminal cost this is the infinite-horizon LQR feedback for $(Q,R)$, independent of $N$. After Layer-1 cancellation the residual plant is $\ddot e=-\Lambda_n^{-1}(\kappa)F_{\text{mpc}}+d$; multiplying by $\Lambda_n(\kappa)$ gives $\Lambda_n\ddot e+D_{\text{eff}}\dot e+K_{\text{eff}}e=\Lambda_n d$, the stated impedance, and adding the feedforward $u_{\text{ff}}=\hat C\dot\kappa+\hat K+J_n\Lambda_n\ddot y_d$ recovers $u_{\text{imp}}$. When constraints are active or $d\neq0$, the MPC departs from this static law---precisely its advantage.
\end{IEEEproof}

\begin{remark}[Prescribed vs.\ realized gains]\label{rmk:realized}
The cost weights $Q=\text{diag}(K_d,D_d)$, $R$ are design \emph{penalties}; the realized impedance $(K_{\text{eff}},D_{\text{eff}})$ is their LQR (Riccati) image, so in general $(K_{\text{eff}},D_{\text{eff}})\neq(K_d,D_d)$---in particular the cheap-control limit $R\to0$ yields a plant-determined gain, not $(K_d,D_d)$. To render a \emph{specific} $(K_d,D_d)$ exactly, prescribe the impedance gain directly ($F_{\text{mpc}}=K_d e+D_d\dot e$), making the equivalence exact at the cost of the predictive look-ahead (cf.~\cite{cao2026phri}, Remark~2).
\end{remark}

\section{Stability Analysis}\label{sec:stability}

\begin{theorem}[Nominal stability]\label{thm:nominal}
With terminal cost $Q_f$ the DARE solution at a nominal configuration $\kappa_0$, if the QP is feasible at $k=0$ and the LPV variation $\|B_d(\kappa_k)-B_d(\kappa_0)\|$ is sufficiently small---quantified for this scalar plant in closed form as $\rho=\Lambda_{n,\text{ref}}/\Lambda_{n,\text{true}}<\rho^\star\approx3.4$ (Proposition~\ref{prop:lpv}, derived in Appendix~\ref{app:lpv}), comfortably satisfied by the $1.4\times$ workspace variation---the closed loop is asymptotically stable and $x_e(k)\to0$ (no disturbance), provided recursive feasibility holds.
\end{theorem}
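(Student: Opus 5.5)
The approach is the standard terminal-cost MPC Lyapunov argument at the nominal configuration $\kappa_0$, followed by a perturbation step that absorbs the LPV mismatch in $B_d(\kappa_k)$. Throughout, $d\equiv0$ and $\hat d\equiv0$, so the centered variables coincide with the raw ones and the QP~\eqref{eq:qp} is the nominal regulator for $x_e$.

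\textbf{Step 1: nominal certificate at $\kappa_0$.} Take $Q_f=P_0$, the DARE solution for $(A_d,B_d(\kappa_0),Q,R)$, and let $K_0$ be the associated LQR gain. Then $V_f(x)=x^\top P_0x$ satisfies the exact decrease identity
\[
V_f\big((A_d+B_d(\kappa_0)K_0)x\big)-V_f(x)=-x^\top(Q+K_0^\top RK_0)x .
\]
The terminal set is a sublevel set $\mathcal X_f=\{x^\top P_0x\le\alpha\}$. We choose $\alpha$ small enough that $K_0x$ satisfies the box~\eqref{eq:force_constraint}, the tendon limits, and the curvature limits on $\mathcal X_f$; this is possible because all constraints are boxes containing the origin in their interior.

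\textbf{Step 2: candidate-sequence argument.} With optimal value $V_N^\star(x_{e,k})$, form the usual shifted candidate $[U^\star_{1:N-1};K_0\bar x_N]$. Using the constant $A_d$ and the frozen $B_d(\kappa_0)$ in the prediction model, the standard telescoping gives
\[
V_N^\star(x_{k+1})-V_N^\star(x_k)\le -x_k^\top Qx_k+\varepsilon_k ,
\]
where $\varepsilon_k$ collects the mismatch between the predicted successor, computed with $B_d(\kappa_0)$, and the true successor, computed with $B_d(\kappa_k)$. Recursive feasibility is assumed in the statement. That assumption is exactly what makes the shifted candidate admissible, so this step needs nothing new.

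\textbf{Step 3: bounding the mismatch (main obstacle).} Here I would exploit the scalar structure. Since $B_d(\kappa)=\Lambda_n^{-1}(\kappa)\,b$ with fixed direction $b=[-\tfrac12\Delta t^2,-\Delta t]^\top$, the true closed loop near the origin is
\[
A_d+\rho\,B_d(\kappa_0)K_0 ,\qquad \rho=\Lambda_{n,\text{ref}}/\Lambda_{n,\text{true}} .
\]
So the perturbation is purely a scalar gain scaling of the loop. I would invoke Proposition~\ref{prop:lpv}, which I take to certify that $A_d+\rho B_dK_0$ is Schur, and ideally satisfies $V_f$-decrease, for $\rho<\rho^\star\approx3.4$. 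The natural route is a gain-margin result: LQR loops have a guaranteed gain margin on $[\tfrac12,\infty)$ in continuous time, and in discrete time a finite upper margin computable in closed form for the double integrator. I would then show that $\varepsilon_k\le c\,|\rho-1|\,\|x_k\|^2$, with $c$ built from $\|P_0\|$ and $\|K_0\|$, is dominated by $x_k^\top Qx_k$ whenever $\rho$ lies in the certified range. The delicate point is that the scaling gain margin is cleanest for the terminal controller, not the MPC law. To handle this, I would either argue that inside $\mathcal X_f$ the unconstrained MPC equals $K_0$ (by Theorem~\ref{thm:equiv}) and apply the margin there, or else use a robust-MPC style bound with $\|B_d(\kappa_k)-B_d(\kappa_0)\|$ small. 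The $1.4\times$ workspace variation then sits well inside $\rho^\star$.

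\textbf{Step 4: conclusion.} The inequality from Steps 2–3 makes $V_N^\star$ a Lyapunov function. It satisfies $V_N^\star\ge x^\top Qx$, with $Q\succ0$ because $K_d,D_d>0$. Summability of $x_k^\top Qx_k$ then yields $x_e(k)\to0$, and the quadratic upper bound on the feasible set gives asymptotic stability.
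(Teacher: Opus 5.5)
Your overall route matches the paper's. Its proof simply defers to the standard terminal-cost/recursive-feasibility argument of~\cite{rawlings2017} and reads the admissible LPV variation off the Jury analysis of Proposition~\ref{prop:lpv}. The gap is in Step~3, where you make that reading explicit. Proposition~\ref{prop:lpv} is a frozen-parameter eigenvalue statement: for each fixed $\rho\in(0,\rho^\star)$ the linear loop $A_\rho=A_d+\rho B_0K_0$, with $B_0=B_d(\kappa_0)$, is Schur. It yields no decrease of $V_f$, hence no bound of $\varepsilon_k$ by $x_k^\top Qx_k$. It also gives no stability when $\rho_k=\rho(\kappa_k)$ varies along the trajectory. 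In fact the DARE, via $B_0^\top P_0(A_d+B_0K_0)=-RK_0$, gives exactly
\[
A_\rho^\top P_0A_\rho-P_0=-Q-(2\rho-1)K_0^\top RK_0+(\rho-1)^2K_0^\top B_0^\top P_0B_0K_0 .
\]
As $\rho\to0^+$ the right-hand side tends to $A_d^\top P_0A_d-P_0$. That matrix is indefinite because $A_d$ is unipotent: its $(1,1)$ entry is $0$ and its off-diagonal entry is $\Delta t\,(P_0)_{11}\neq0$. Yet Proposition~\ref{prop:lpv} certifies $A_\rho$ Schur there. So your claim that the mismatch is dominated ``whenever $\rho$ lies in the certified range'' is false. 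What Steps~2--3 actually give is decrease for $|\rho-1|$ below a constant of order $\lambda_{\min}(Q)/c$. That is only the qualitative ``sufficiently small'' clause, with no connection to $\rho^\star\approx3.4$. Your fallback of working inside $\mathcal X_f$ hits the same wall. Invariance of a $P_0$-sublevel set under the perturbed loop is exactly $P_0$-non-increase, not Schur stability. Moreover, a constant $c$ built from $\|P_0\|,\|K_0\|$ is valid only where the MPC law equals $K_0$.

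The paper's sketch does not bridge this step either, so you are not missing an idea it supplies, but the step as written does not hold. To make your Lyapunov route quantitative, replace the Jury range by the set of $\rho$ on which the right-hand side above is negative definite. This set has the following properties:
\begin{itemize}
\item it is an interval, since the expression is matrix-convex in $\rho$;
\item it is contained in $(0,\rho^\star)$ and bounded away from $0$;
\item it contains the discrete-time LQR gain margin $[1/(1+\sqrt\gamma),\,1/(1-\sqrt\gamma)]$ with $\gamma=R/(R+B_0^\top P_0B_0)$;
\item because $P_0$ is a common Lyapunov function, it also covers time-varying $\rho_k$.
\end{itemize}
You would then verify that it contains the workspace range $[0.70,1.0]$. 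That settles the unconstrained (terminal) regime. Away from it you still need the inherent-robustness bound on $V_N^\star$, which your argument provides only for $|\rho-1|$ small.

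A smaller point: recursive feasibility is not what makes the shifted candidate admissible. Admissibility holds at the \emph{predicted} successor by invariance of $\mathcal X_f$ under $K_0$. Recursive feasibility is what puts the \emph{true} successor in the domain where you compare the two values of $V_N^\star$.
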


\begin{theorem}[ISS / offset-free]\label{thm:iss}
If $\|d(t)\|\le\bar d$, the augmented system is observable, and the input-centered QP \eqref{eq:qp} is feasible, $(x_e,\hat d)$ is input-to-state stable with an error bound that scales with the unmodeled time variation of $d$. For constant matched disturbances in the nominal, free-space limit, $\hat d\to d_\infty$ implies $U_d=\mathbf{1}_N\Lambda_n d_\infty$ and $\Delta(d_\infty)+\Gamma U_d=0$, hence the unconstrained law becomes $F_{\text{mpc}}=\Lambda_n d_\infty+K_{\text{eff}}e+D_{\text{eff}}\dot e$ and the steady error is exactly zero. For slowly varying disturbances the residual error scales with $\|\dot d\|$. Under stiff contact the achievable steady-state error is additionally bounded by the force constraint and is therefore nonzero (Section~\ref{sec:compare} confirms a contact-limited residual).
\end{theorem}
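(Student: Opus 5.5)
The plan is to treat the closed loop as the nominal MPC loop of Theorem~\ref{thm:nominal}, driven by two exogenous signals: the estimation error $\tilde d=d-\hat d$ and the variation of $d$. ISS of the composite system then follows by a cascade argument. The proof has four steps.

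\textbf{Step 1: reduce the controller to the nominal regulator.} I would use the change of variables from Section~\ref{sec:design}, $V=U-U_d(\hat d)$. With $\Lambda_n$ frozen over the horizon, $\Delta(\hat d)+\Gamma U_d(\hat d)=0$. Hence the input-centered QP \eqref{eq:qp} in $V$ is exactly the disturbance-free QP, shifted by the constant input $\Lambda_n\hat d$. Substituting into \eqref{eq:errdyn}, the true error dynamics become
\[
x_{e,k+1}=A_d x_{e,k}+B_d V^\star_0(x_{e,k})+G_d\,\tilde d_k .
\]
This is the nominal closed loop of Theorem~\ref{thm:nominal}, perturbed additively by $G_d\tilde d_k$. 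Under an LPV mismatch $\rho\neq1$, a further additive term $(B_d(\kappa_k)-B_d(\kappa_0))U_k$ appears. I would absorb it using the small-gain margin of Proposition~\ref{prop:lpv}.

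\textbf{Step 2: ISS of the error subsystem.} I would use the MPC value function $V_N(x_e)$ with the DARE terminal cost as a Lyapunov function. From the proof of Theorem~\ref{thm:nominal}:
\begin{itemize}
\item it is sandwiched between class-$\mathcal K_\infty$ quadratic bounds;
\item it decreases by at least $x_e^\top Q x_e$ along the nominal loop.
\end{itemize}
Locally on the feasible set, $V_N$ is piecewise quadratic and hence Lipschitz (multiparametric QP). The perturbed decrease is therefore
\[
V_N(x^+)-V_N(x)\le -\alpha(\|x\|)+L\|G_d\|\,\|\tilde d\|,
\]
which gives ISS from $\tilde d$ to $x_e$ while feasibility is assumed, as in the hypothesis.

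\textbf{Step 3: the estimator error.} The steady-state Kalman filter on the observable augmented pair gives estimation-error dynamics $\tilde z_{k+1}=(A_{\text{aug}}-LC_{\text{aug}}A_{\text{aug}})\tilde z_k+w_k$, with a Schur matrix. Here $w_k$ collects the unmodeled increment $d_{k+1}-d_k$, which is zero under the model $\dot d=0$. This subsystem is linear and exponentially stable, so $\|\tilde d_k\|$ is ISS with respect to $\sup\|d_{k+1}-d_k\|\lesssim\Delta t\,\|\dot d\|$. A cascade of ISS systems is ISS, so $(x_e,\hat d-d)$ is ISS with respect to the variation of $d$. Consequently the error bound scales with $\|\dot d\|$, as claimed. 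Boundedness $\|d\|\le\bar d$ keeps $\hat d$ bounded and so keeps $U_d$ within the input set.

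\textbf{Step 4: offset-free and contact-limited behaviour.} For constant $d=d_\infty$, Step 3 gives $\hat d\to d_\infty$. From Step 1, the unconstrained law is then $F_{\text{mpc}}=\Lambda_n d_\infty+K_{\text{eff}}e+D_{\text{eff}}\dot e$, using Theorem~\ref{thm:equiv} for the static gain. The forcing term cancels exactly in \eqref{eq:errdyn}. The remaining LQR loop is stable, so at any equilibrium $e=\dot e=0$.

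Under stiff contact, zero error would require $|F_{\text{mpc}}|\approx k_t\,\delta>F_{\text{safe}}$. The box \eqref{eq:force_constraint} then binds, and the equilibrium is fixed by the saturated input rather than by $e=0$. This gives a nonzero residual bounded through \eqref{eq:force_map}.

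\textbf{Main obstacle.} I expect the hardest part to be the Lipschitz and recursive-feasibility argument in Step 2. The perturbation $G_d\tilde d$ may push the state out of the feasible set, and under $\kappa$-varying $\Lambda_n$ the cancellation identity of Step 1 is only approximate. I would first restrict to a robust positively invariant sublevel set of $V_N$, sized by $\bar d$. I would then treat the LPV residual $\Delta(\hat d)+\Gamma U_d\neq0$ as an additional bounded input, proportional to $|\rho-1|\,|\hat d|$.
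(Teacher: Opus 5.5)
Your overall route is the one the paper intends. The paper's proof is only a pointer to terminal-cost MPC stability and the Pannocchia--Rawlings augmented-observer argument, specialized through the identity $\Delta(\hat d)+\Gamma U_d(\hat d)=0$. Step 1, Step 3 up to its last sentence, and the free-space part of Step 4 correctly elaborate that argument.

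The genuine gap is the last sentence of Step 3: boundedness $\|d\|\le\bar d$ does \emph{not} keep $U_d=\Lambda_n\hat d$ inside the input set. After your change of variables, the box \eqref{eq:force_constraint} becomes $-F_{\text{safe}}-\Lambda_n\hat d\le V\le F_{\text{safe}}-\Lambda_n\hat d$. So the QP in $V$ has the disturbance-free cost but a constraint set that moves with $\hat d$, and the terminal-cost decrease of Step 2 requires $V=0$ to be strictly admissible. That needs an explicit quantitative hypothesis, e.g.\ $\Lambda_n\big(\bar d+\sup_k|\tilde d_k|\big)<F_{\text{safe}}$ with margin, plus the analogous tendon-limit condition. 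The theorem's hypothesis that the QP is feasible does not supply this, since a pure input box is always feasible. Without it the ISS claim is false: for constant $d$ with $\Lambda_n|d|>F_{\text{safe}}$, every admissible input leaves $\ddot e=d-\Lambda_n^{-1}F_{\text{mpc}}$ of fixed sign, and $e$ diverges. Your ``main obstacle'' paragraph (an RPI sublevel set sized by $\bar d$) half-recognizes this, but it must be an assumption restricting $\bar d$, not a consequence of boundedness.

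Two further points need repair.

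\emph{The Lyapunov function depends on $\hat d$.} Because the constraint set depends on $\hat d$, your Lyapunov function is really $V_N(x_e;\hat d)$, and its one-step change contains a term in $\hat d_{k+1}-\hat d_k$. This still fits your cascade: the increment is the filter gain times the innovation, hence linear in $\tilde z_k$ and $w_k$, and $V_N$ is Lipschitz in $\hat d$ on compact feasible sets. But the term has to be included in the decrease estimate.

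\emph{Step 4 lies outside the ISS framework.} The regime excluded by the missing hypothesis is exactly the stiff-contact case of Step 4. There $d$ is no longer an exogenous bounded input but a state-dependent contact force proportional to $k_t\delta$. Your saturated-equilibrium argument is therefore not covered by Steps 2--3, and the existence and attractivity of that equilibrium are asserted, not proved. That matches the paper's own level of support (simulation only, Section~\ref{sec:compare}), but you should present it as a heuristic characterization rather than as part of the ISS proof.
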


\emph{Proof status.} Theorem~\ref{thm:equiv} is supported by the sketch above. Theorems~\ref{thm:nominal}--\ref{thm:iss} follow standard arguments---terminal-cost/recursive-feasibility stability for constrained MPC~\cite{rawlings2017} and the augmented-observer offset-free result~\cite{pannocchia2003}---specialized to the constant-$A_d$ scalar plant with the input-centering identity $\Delta(\hat d)+\Gamma U_d(\hat d)=0$.

\begin{proposition}[LPV stability margin]\label{prop:lpv}
For the scalar plant \eqref{eq:disc}, let $K=[k_1,k_2]$ be a gain designed at $\Lambda_{n,\text{ref}}$ (normalized to $\Lambda_{n,\text{ref}}=1$) and applied where the true inertia is $\Lambda_{n,\text{true}}$, giving the closed loop $A_{cl}(\rho)=A_d-\rho\,B_d(\Lambda_{n,\text{ref}})K$ with $\rho=\Lambda_{n,\text{ref}}/\Lambda_{n,\text{true}}$ (since $B_d\propto\Lambda_n^{-1}$). Then $A_{cl}(\rho)$ is Schur for all $\rho\in(0,\rho^\star)$ with
\begin{align}
\rho^\star
&=\min\!\left\{\frac{2}{\Delta t\,|k_2|},\
\frac{2}{\Delta t\,|k_2-\tfrac12\Delta t\,k_1|}\right\} \nonumber\\
&=\frac{2}{\Delta t\,|k_2|},
\label{eq:rhostar}
\end{align}
the first bound binding as a real pole exits through $z=-1$; the second corresponds to $\det=-1$. For the benchmark gain ($k_2=-294.9$, $\Delta t=2$\,ms) this gives $\rho^\star\approx3.39$, agreeing with the simulated spectral-radius sweep (Appendix~\ref{app:lpv}; \texttt{simulation/catheter\_verify.py}).
\end{proposition}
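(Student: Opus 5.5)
The plan is a direct Jury (Schur--Cohn) test on the $2\times2$ closed-loop matrix, with everything linear in $\rho$. With $\Lambda_{n,\text{ref}}=1$, \eqref{eq:disc} gives $B_d(\Lambda_{n,\text{ref}})=-[\tfrac12\Delta t^2,\ \Delta t]^\top$. Hence
\[
A_{cl}(\rho)=\begin{bmatrix}1+\tfrac12\rho\Delta t^2k_1 & \Delta t+\tfrac12\rho\Delta t^2k_2\\ \rho\Delta t\,k_1 & 1+\rho\Delta t\,k_2\end{bmatrix}.
\]
First I will compute its trace and determinant. The $\rho^2$ terms cancel in the determinant because $A_d$ is unipotent and $B_dK$ has rank one. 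This gives
\[
T(\rho)=2+\rho\Delta t\,k_2+\tfrac12\rho\Delta t^2k_1,\qquad D(\rho)=1+\rho\Delta t\,k_2-\tfrac12\rho\Delta t^2k_1 .
\]
The characteristic polynomial is $z^2-T z+D$.

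Next I apply the Jury conditions for a monic quadratic: $1-T+D>0$, $1+T+D>0$, and $|D|<1$.
\begin{itemize}
\item The $z=1$ condition reads $-\rho\Delta t^2k_1>0$. For $\rho>0$ this is just $k_1<0$, i.e.\ the stabilizing sign convention of the benchmark gain, which I take from the assumption that $K$ stabilizes at $\rho=1$.
\item The $z=-1$ condition reads $4+2\rho\Delta t\,k_2>0$. With $k_2<0$ this is $\rho<2/(\Delta t|k_2|)$, the first term of \eqref{eq:rhostar}, where a real pole leaves through $z=-1$.
\item The determinant condition $D(\rho)<1$ is $\rho\Delta t\,(k_2-\tfrac12\Delta t k_1)<0$. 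This holds for all $\rho>0$ iff $k_2-\tfrac12\Delta t k_1<0$, which again follows from Schur stability at $\rho=1$.
\item The other half, $D(\rho)>-1$, is $\rho<2/(\Delta t|k_2-\tfrac12\Delta t k_1|)$, the second term of \eqref{eq:rhostar}.
\end{itemize}
Since every condition is affine in $\rho$, the stable set is an interval $(0,\rho^\star)$, with $\rho^\star$ the minimum of the two upper bounds.

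To see that the first bound binds, I write $k_1=-|k_1|$ and $k_2=-|k_2|$. The determinant condition at $\rho=1$ gives $|k_2|>\tfrac12\Delta t|k_1|$, so $0<|k_2-\tfrac12\Delta t k_1|=|k_2|-\tfrac12\Delta t|k_1|<|k_2|$. The second bound therefore strictly exceeds the first, and $\rho^\star=2/(\Delta t|k_2|)$. Plugging in $k_2=-294.9$ and $\Delta t=2$\,ms gives $2/(0.002\cdot294.9)\approx3.39$. I will then check this against the spectral-radius sweep of Appendix~\ref{app:lpv}.

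The main obstacle is sign bookkeeping rather than the algebra. The statement writes absolute values, so I must pin down the sign convention of $K$ (including the minus sign carried by $B_d$) and extract $k_1<0$, $k_2<0$, and $|k_2|>\tfrac12\Delta t|k_1|$ from nominal stability. I will do this by evaluating the Jury conditions at $\rho=1$ first, before treating general $\rho$. If the benchmark gain used the opposite sign convention, the same argument would go through with $K\mapsto-K$.
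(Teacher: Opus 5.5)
Your proposal is correct and follows the paper's own argument: the same closed-loop matrix, the same trace and determinant (including the cancellation of the $\rho^2$ terms), and Jury's three conditions, which produce the $z=-1$ and $\det=-1$ crossings. Your one refinement is to derive $k_1<0$, $k_2<0$ and $|k_2|>\tfrac12\Delta t|k_1|$ from Schur stability at $\rho=1$, so the first bound provably binds for any nominally stabilizing gain; the paper instead takes these signs from the benchmark gain and confirms numerically that the first bound binds ($3.39$ versus $3.42$).
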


\begin{remark}[Workspace margin]\label{rmk:lpv}
Over the workspace $\Lambda_n(\kappa)$ varies only $1.4\times$, i.e.\ $\rho\in[0.70,1.0]$, a margin of more than $3\times$ to $\rho^\star\approx3.4$ ($\Lambda_{n,\text{true}}>0.29\,\Lambda_{n,\text{ref}}$); under-estimates ($\Lambda_{n,\text{true}}>\Lambda_{n,\text{ref}}$, $\rho<1$) are always stable since $\rho^\star>1$. A single fixed gain is therefore robustly stable, and the $\Lambda_n(\kappa)$-normalized gain holds the closed-loop poles nearly configuration-invariant (spread ${\sim}2\times10^{-6}$, Section~\ref{sec:verify}). The nominal-limit caveat on Theorem~\ref{thm:iss} is essential: as the simulation shows (Section~\ref{sec:compare}; Appendix~\ref{app:forcereg}), offset-free tracking does \textbf{not} hold under stiff contact or fast cardiac motion.
\end{remark}

\section{Discussion}\label{sec:discussion}

\subsection{Predictive Interaction Optimization and the $\dot d=0$ Model}\label{sec:ddot}

The primary value of prediction here is \textbf{interaction-constraint enforcement} and \textbf{trajectory feedforward}, not contact-force forecasting: the feedforward $\ddot p_d$ and the known constraint bounds are exact, the Kalman disturbance estimate moderate, and force prediction near-useless (the tissue is unknown). A zero-order-hold $\hat d(k{+}i)=\hat d(k)$ is therefore adopted, and the offset-free guarantee holds at steady state regardless of prediction quality. This integrating model is a worst-case approximation for any slowly-varying disturbance; the ${\sim}1$\,Hz cardiac force could instead be predicted by (A)~raising the process-noise covariance for faster adaptation, or (B)~an ECG-gated periodic model $\hat d(k{+}1)=\hat d_{\text{DC}}(k)+A_{\text{card}}\sin(\omega_{\text{heart}}(k{+}1)\Delta t+\phi)$---left to future work. During first contact the estimate lags a few samples; the hard force constraint is the safety backstop in that transient.

\subsection{Modeling and Sensing}

\emph{Partial cancellation.} Catheter stiffness $K(\kappa)$ suffers hysteresis and specimen variability, so inverting an inaccurate $\hat K$ can \emph{increase} the disturbance; we cancel only reliably-estimated components and let the residual add to $d(t)$, including a feedforward term only if its omission would exceed the Kalman bandwidth. \emph{Sensing.} Tip position for the Kalman updates is available from EM tracking, fluoroscopy, intracardiac echocardiography, or FBG shape sensing (${\sim}0.5$\,mm); no force sensor is required, though an available FBG force estimate can be injected to reduce contact-transient lag.

\subsection{Robustness to Contact-Normal Misalignment}

The scalar $y$ is taken along an assumed contact normal; proximal deflections can rotate the true normal by $\theta$. The effective contact stiffness along the assumed axis scales as $k_t\cos^2\theta$ (a $7\%$ change at $15^\circ$), a slowly-varying mismatch $\hat d$ absorbs, while the off-axis ${\sim}F\sin\theta$ does not enter the regulated coordinate, leaving offset-free and ISS intact. For the safety bound the projection underestimates the true normal force by $1/\cos\theta$, so a misalignment-aware design tightens $F_{\text{safe}}$ by $\cos\theta_{\max}$. \emph{Tested on the MuJoCo benchmark:} the worst-case admissible true force $F_{\text{safe}}/\cos\theta$ grows to $0.577$\,N at $\theta=30^\circ$ (realized $0.54$\,N)---a $15\%$ breach---whereas the $\cos\theta_{\max}$-tightened bound ($\theta_{\max}=30^\circ$) holds it ${\le}0.50$\,N for all $\theta\le30^\circ$ (realized ${\le}0.46$\,N). Tracking under large, time-varying $\theta$ remains future work.

\section{Interaction-Dynamics Simulation Results}\label{sec:sim}

\subsection{Setup}

We evaluate the controller on a \emph{distributed-compliance} plant in the MuJoCo physics engine: an eight-link pseudo-rigid-body (PRB) tendon-driven catheter (each joint a torsional spring approximating the segment flexural rigidity, total length 5\,cm), actuated through a routed spatial tendon (tension-only), pressing on a Kelvin--Voigt tissue wall ($k_t=5$\,kN/m, $b_t=40$\,N$\cdot$s/m, applied analytically on the tip so the stiff contact is faithful---MuJoCo's regularized soft contact cannot represent it). This plant genuinely violates the reduced-order assumptions: eight elastic DOFs, a configuration-dependent tendon transmission, and distributed bending the scalar model lumps away. The scalar operational-space inertia $\Lambda_n(\kappa)$ and tendon-to-tip-normal transmission $J_n(\kappa)$ are read from the model (mass matrix and tip Jacobian on the contact normal): $\Lambda_n\approx3.5\times10^{-3}$, $J_n\approx0.087$, catheter tip stiffness $k_{\text{eff}}\approx8.4$\,N/m. Two specializations carry the scalar design onto this plant: the impedance gains are $\Lambda_n$-scaled to the measured inertia (preserving $\omega\approx30$\,rad/s), and the corrective force $F_{\text{mpc}}$ is realized in \emph{tendon space}, $T=-(u_{\text{ff}}+F_{\text{mpc}})/J_n$ with Layer-1 feedforward $u_{\text{ff}}=k_{\text{eff}}y_d$, so the hard bound \eqref{eq:force_constraint} is the box $|F_{\text{mpc}}|\le F_{\text{safe}}$ on the corrective force (the elastic $k_{\text{eff}}e$ contribution to $\hat F_n$ is negligible on the compliant catheter). By Theorem~\ref{thm:equiv} the unconstrained controller is the static impedance feedback and the scalar force-magnitude QP reduces to the predicted-force clip used here; the offset-free disturbance is an integrating estimator on the tip error. Control rate 500\,Hz; tendon limit $\pm8$\,N; $F_{\text{safe}}=0.5$\,N. The reference approaches the surface (1\,s), presses to a target 1.5\,mm \emph{past} it ($\approx7.5$\,N if tracked rigidly---far beyond what an $8$\,N tendon delivers through $J_n$), holds 1\,s, and retracts, deliberately stressing the safety mechanism. Reproducible: \texttt{simulation/catheter\_benchmark\_mujoco.py}.

\subsection{Controller Comparison}\label{sec:compare}

\begin{table}[!t]
\caption{Interaction-Dynamics Comparison Under Free Motion and Contact}
\label{tab:compare}
\centering
\footnotesize
\setlength{\tabcolsep}{3.5pt}
\begin{tabular}{lcccc}
\toprule
Controller & \shortstack{Approach\\RMS (mm)} & \shortstack{Max\\$F$ (N)} & \shortstack{$F_{\text{safe}}$\\viol.?} & \shortstack{Hold err.\\(mm)} \\
\midrule
Classical impedance & 0.28 & 0.17 & No & 1.48 \\
Predictive ID (no FC) & 0.03 & 0.60 & \textbf{Yes} & 1.41 \\
\textbf{Predictive ID (with FC)} & \textbf{0.03} & \textbf{0.47} & \textbf{No} & 1.41 \\
Joint-space PD & 1.82 & 0.00 & No & 3.76 \\
\bottomrule
\end{tabular}
\end{table}

\begin{figure}[!t]
\centering
\includegraphics[width=\columnwidth]{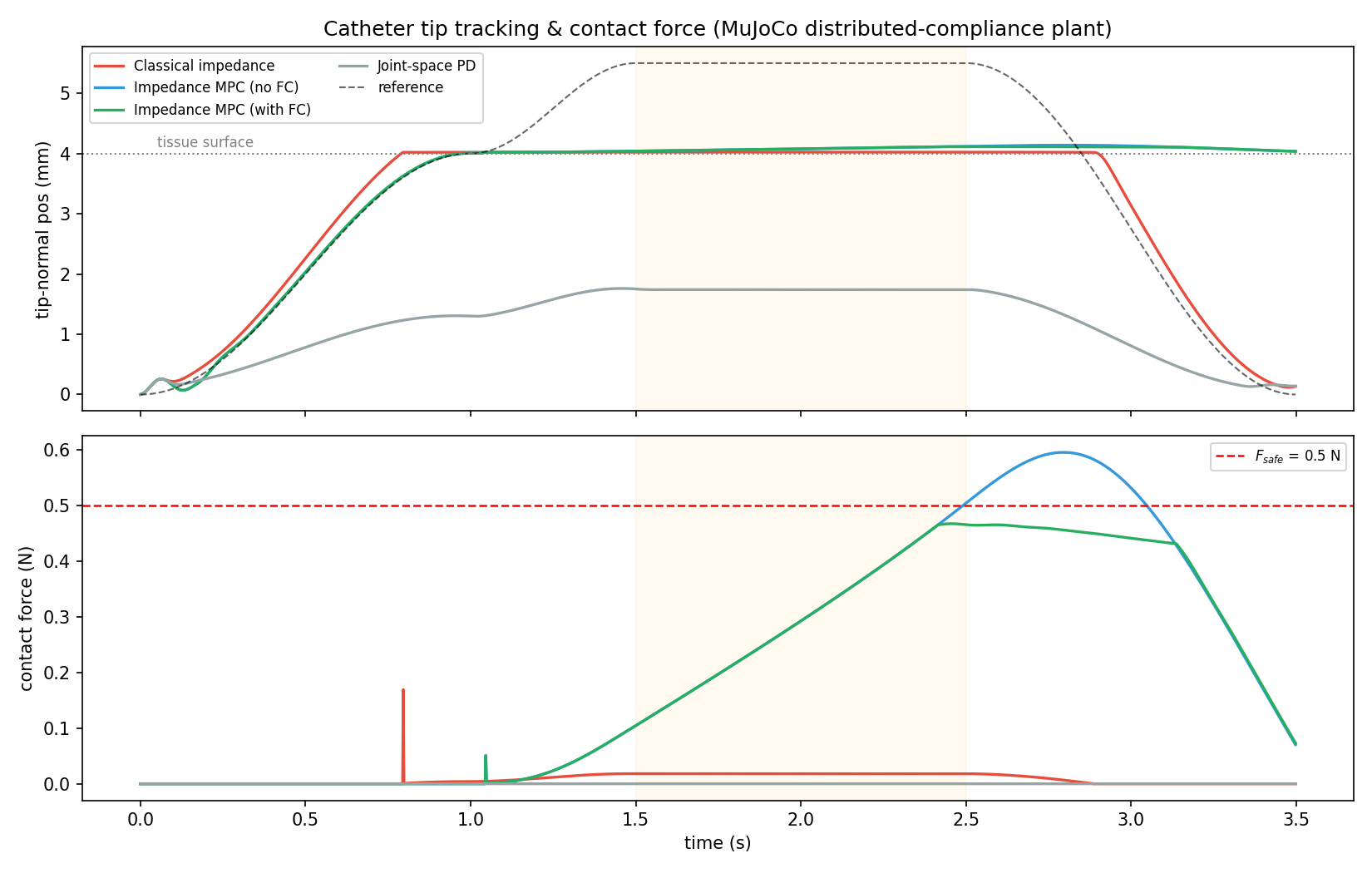}
\caption{Catheter tip tracking (top) and contact force (bottom) for the four controllers of Table~\ref{tab:compare} over the approach--press--hold--retract task, on the MuJoCo distributed-compliance plant. The shaded band marks the contact-hold window. The force-constrained predictive interaction-dynamics controller (green) tracks the reference and holds the contact force below the 0.5\,N safety bound (dashed); the unconstrained offset-free controller (blue) tracks identically but drives to 0.60\,N---over the bound---to eliminate position error against the penetrating target. Classical impedance (red) stays gentle (0.17\,N) but mistracks, and joint-space PD (gray), lacking the Layer-1 feedforward, cannot overcome the distributed elasticity to reach the tissue.}
\label{fig:benchmark}
\end{figure}

Key results (Table~\ref{tab:compare}, Fig.~\ref{fig:benchmark}):
\begin{itemize}
\item \emph{Offset-free rejection of the model-reduction residual (free space).} The uncancelled higher-order bending dynamics and configuration-dependent $J_n$ act as a slowly-varying residual; classical impedance leaves a 0.28\,mm approach error against it, which the offset-free integrator drives to 0.03\,mm (\textbf{90\% reduction})---rejecting genuine model-reduction error, not a synthetic bias.
\item \emph{Force safety is the decisive interaction constraint.} Only the force-constrained predictive interaction-dynamics controller achieves both accurate tracking (0.03\,mm) \emph{and} the 0.5\,N bound (peak 0.47\,N). Classical impedance (0.17\,N) and joint-space PD (no contact) stay under the bound only by being too soft to track the penetrating target---safety bought with accuracy, the trade the constraint removes.
\item \emph{Offset-free motion regulation \textbf{without} a force limit overshoots the bound.} The unconstrained predictive controller, driving hard to eliminate position error against the penetrating target, pushes the contact force to 0.60\,N ($0.595$\,N precisely)---19\% over $F_{\text{safe}}$. This is far milder than the multi-newton overshoot a stiff lumped plant predicts (the compliant catheter and $8$\,N tendon cap the realizable force below $1$\,N), but the bound is \emph{still breached}: offset-free motion regulation and force safety remain in tension, resolved only by the hard interaction constraint (0.60\,$\to$\,0.47\,N at identical tracking).
\item \emph{Hold error is contact-limited.} The tracking controllers retain ${\sim}1.41$\,mm hold error because the commanded depth needs ${>}7$\,N the tendon cannot deliver. Joint-space PD, lacking the Layer-1 feedforward, cannot even reach the tissue at a stably-rescaled gain (1.82\,mm approach, no contact)---evidence that the feedforward, not raw gain, overcomes the distributed elasticity. The simulation thus does \textbf{not} support a blanket ``zero steady-state error'' claim under stiff contact---offset-free holds in free space and in the nominal limit.
\end{itemize}

\subsection{Comparison with Published Catheter-Control Approaches}

Table~\ref{tab:capability} positions the proposed predictive interaction-dynamics framework against representative classes of catheter/continuum-robot control from the literature. Direct numerical comparison is limited---reported conditions (platform, sensing, contact, metric) differ substantially---so the table compares \emph{capabilities} and order-of-magnitude characteristics rather than asserting head-to-head accuracy. The row for this work is from the MuJoCo simulation in Section~\ref{sec:compare}; the literature rows summarize the cited methods at a capability level rather than reproducing their experiments.

\begin{table}[!t]
\caption{Capability Comparison with Representative Catheter-Control Methods}
\label{tab:capability}
\centering
\footnotesize
\setlength{\tabcolsep}{2.5pt}
\begin{tabular}{lcccccc}
\toprule
Method & Model & Rate & \shortstack{Hard\\$F$} & \shortstack{Offset-\\free} & \shortstack{Force\\sens.} & Stab. \\
\midrule
Class.\ imp.~\cite{hogan1985} & port imp. & kHz & No & No & opt. & passiv. \\
Jacob.~\cite{camarillo2008,yip2014} & kinematic & 10--100\,Hz & No & No & local & partial \\
Cosserat~\cite{rucker2011} & Cosserat & 10--20\,Hz & soft & varies & yes & NLP-dep. \\
Learning & data & high & No & No & varies & none \\
ADRC~\cite{han2009} & lumped-dist. & kHz & No & yes & yes & yes \\
\textbf{This work} & int.-dyn.+Kal. & \textbf{500\,Hz} & \textbf{Yes} & \textbf{yes} & \textbf{no} & \shortstack{ISS+\\DARE} \\
\bottomrule
\end{tabular}
\end{table}

\emph{Observations.} (1)~Among the surveyed classes, only the proposed framework provides a \emph{hard} contact-force interaction constraint with formal feasibility, the clinically essential capability (Section~\ref{sec:compare}). Here ``hard'' means a horizon-wise force bound enforced inside the optimization; input saturation on a non-predictive law (e.g.\ ADRC with a clamped command) limits the \emph{command}, not the predicted contact force, and cannot guarantee feasibility against a penetrating reference, so the ``No'' entries are in this stronger, predictive sense. (2)~It retains a high update rate (500\,Hz) because the configuration-invariant $A_d$ structure makes the unconstrained step a matrix--vector multiply, unlike Cosserat predictive control whose online NLP limits rates to ${\sim}10$--20\,Hz. (3)~It requires no force sensor, unlike ADRC formulations that key off a measured force channel. (4)~Its formal guarantees (ISS, DARE-based stability) exceed those of learning-based and purely kinematic controllers. The trade-off is reduced model fidelity relative to full Cosserat predictive control---mitigated by the disturbance-compression principle (Remark~\ref{rmk:compress}).

\emph{Quantitative reading.} Table~\ref{tab:quant} places the present work against the closest published baselines---the sensor-free tendon-driven ablation-catheter force controllers of Jolaei \emph{et al.}~\cite{jolaei2020} and Kesner and Howe~\cite{kesner2014}.

\begin{table}[!t]
\caption{Quantitative Comparison with Reported Sensor-Free / Force-Controlled Catheter Results}
\label{tab:quant}
\centering
\footnotesize
\setlength{\tabcolsep}{3pt}
\begin{tabular}{lcccc}
\toprule
Work & Objective & Sensing & Force result & \shortstack{Hard\\bound} \\
\midrule
\cite{jolaei2020} & force reg. & sensor-free & 0.03--0.05\,N RMSE & no \\
\cite{kesner2014} & force reg. & sensor+US & ${\sim}0.08$\,N RMSE & no \\
\shortstack[l]{\textbf{This work}\\(safety)} & \shortstack[c]{pos.\ track\\+ $F$ safety} & sensor-free & \shortstack[c]{peak \textbf{0.47}\,N\\($\le0.5$)} & \textbf{yes} \\
\bottomrule
\end{tabular}
\end{table}

A direct head-to-head is \emph{not} meaningful: \cite{jolaei2020,kesner2014} \textbf{regulate} contact force to a setpoint (30--80\,mN, \emph{hardware} results), whereas this work tracks a position trajectory while \textbf{bounding} the contact force as a hard constraint---so the comparison is by capability, not by the force-error column. The same formulation \emph{can} also regulate force, but we make \textbf{no} claim of regulation parity: with the offset-free position drive propagated through the horizon the two objectives compete, and 1\,Hz cardiac motion degrades regulation further (characterized in Appendix~\ref{app:forcereg}, motivating the cardiac-phase-aware periodic model of Section~\ref{sec:ddot}). The distinct contribution is the unification of offset-free position tracking with a hard predicted-force bound and feasibility guarantees in one constrained optimization---which setpoint force regulation does not provide.

\subsection{Disturbance Estimation}

The Kalman filter converges within a few sample periods; during the contact-onset transient the estimate lags, but the tendon-force cap prevents bound violation before convergence---the constraint, not the estimate, provides the safety guarantee. The disturbance estimate improves \emph{tracking}; the hard constraint provides \emph{safety}.

\subsection{Safety Under Cardiac Wall Motion}\label{sec:cardiac}

The headline result (Table~\ref{tab:compare}) is a position-tracking task with a hard force bound; its dependence on cardiac motion is fundamentally different from the force-regulation mode of Appendix~\ref{app:forcereg}, and far more benign. We test it directly by re-running the force-constrained predictive interaction-dynamics controller with a beating wall (Table~\ref{tab:cardiac}, Fig.~\ref{fig:cardiac}). The bound is respected across \emph{all} tested conditions---static, $0.3$\,mm at $1$\,Hz, and $0.5$\,mm at $1.2$\,Hz with measurement noise (peak $\approx0.47$\,N throughout, no violation)---for an instructive physical reason: the catheter tip compliance ($k_{\text{eff}}\approx8$\,N/m) is ${\sim}600\times$ softer than the tissue ($k_t=5$\,kN/m), so the distributed catheter \emph{rides with} the moving wall and the contact-force variation from even a $0.5$\,mm wall excursion is small (peak essentially unchanged from static). The hard constraint continues to cap the \emph{controller-induced} force regardless of the $\dot d=0$ model, and the contact-onset impact transients $b_t(\dot y-\dot y_{\text{wall}})$ remain under the bound because the compliant tip closes on the wall gently. The compliant continuum is thus inherently more forgiving of cardiac motion than a stiff lumped model suggests: the safety guarantee on controller-induced force is robust to cardiac motion, and the residual environment-induced force is buffered by catheter compliance rather than entering the impact-transient regime. Tight force \emph{regulation} (Appendix~\ref{app:forcereg}) is the part that additionally requires predicting the motion.

\begin{table}[!t]
\caption{Safety Mode (Predictive ID\,+\,Force Constraint) Under Cardiac Wall Motion}
\label{tab:cardiac}
\centering
\footnotesize
\setlength{\tabcolsep}{4pt}
\begin{tabular}{lccc}
\toprule
Wall condition & \shortstack{Approach\\RMS (mm)} & \shortstack{Peak\\$F$ (N)} & \shortstack{$F_{\text{safe}}$\\viol.?} \\
\midrule
Static (Table~\ref{tab:compare} baseline) & 0.03 & 0.47 & No \\
$0.3$\,mm @ $1$\,Hz & 0.03 & 0.47 & No \\
$0.5$\,mm @ $1.2$\,Hz $+$ noise & 0.06 & 0.47 & No \\
\bottomrule
\end{tabular}
\end{table}

\begin{figure}[!t]
\centering
\includegraphics[width=\columnwidth]{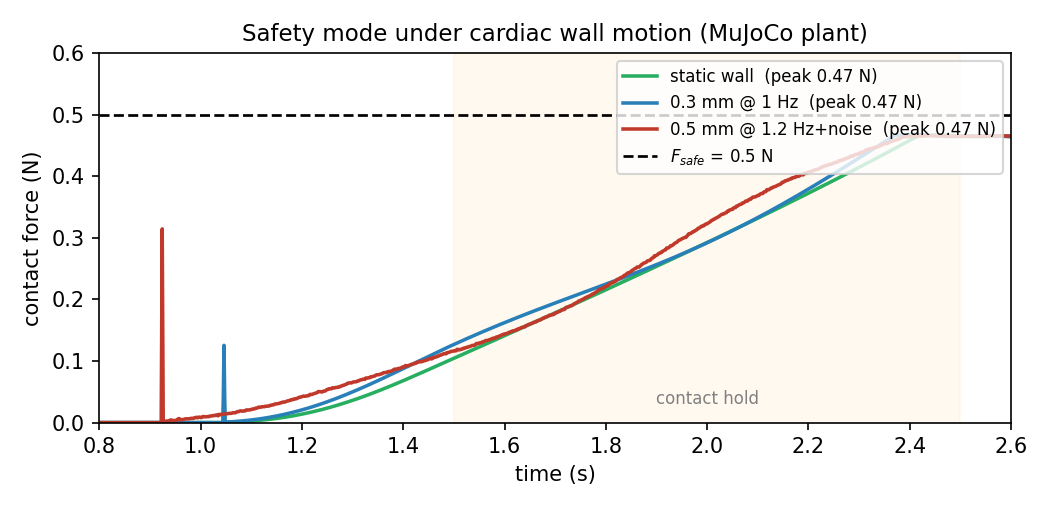}
\caption{Contact force for the force-constrained predictive interaction-dynamics controller (position-tracking safety mode) under a beating tissue wall (Table~\ref{tab:cardiac}), 
on the MuJoCo plant. All three conditions---static (green), $0.3$\,mm/$1$\,Hz (blue), and $0.5$\,mm/$1.2$\,Hz with noise (red)---stay below the $0.5$\,N 
bound (dashed) with peak $\approx0.47$\,N, because the compliant catheter ($k_{\text{eff}}\approx8$\,N/m) rides with the moving wall and the contact-onset 
impact spikes $b_t(\dot y-\dot y_{\text{wall}})$ remain bounded. The hard constraint caps the controller-induced force throughout.}
\label{fig:cardiac}
\end{figure}

\subsection{Verification of Structural Claims}\label{sec:verify}

Four further checks---structural properties of the discrete-time controller and contact law, plus an approach-velocity qualification on the current 
MuJoCo plant---complete the picture (\texttt{simulation/catheter\_verify.py}):

\emph{(1) Configuration-adaptive compliance.} Computing $\Lambda_n(\kappa)$ from the scalar normal Jacobian \eqref{eq:JnLambda} over $\kappa\in[2,25]\,\text{m}^{-1}$ (a $1.4\times$ inertia variation), the $\Lambda_n(\kappa)$-normalized controller holds the closed-loop pole spread to ${\sim}2\times10^{-6}$---roughly $80\times$ tighter than a fixed-$\Lambda_n$ gain's ${\sim}1.5\times10^{-4}$ drift (which grows with control authority); Fig.~\ref{fig:poles} plots both pole loci. The normalization is a closed-form gain scaling by $\Lambda_n(\kappa)$---equivalent for this plant to a per-configuration LQR redesign, since the optimal gain scales with $\Lambda_n$---and it is this $\Lambda_n$-awareness, absent from a single fixed gain, that renders the tip compliance configuration-independent. The magnitude of the fixed-gain drift is modest for a single segment's $1.4\times$ range but grows for the larger inertia variation of multi-segment catheters (Section~\ref{sec:conclusion}).

\begin{figure}[!t]
\centering
\includegraphics[width=\columnwidth]{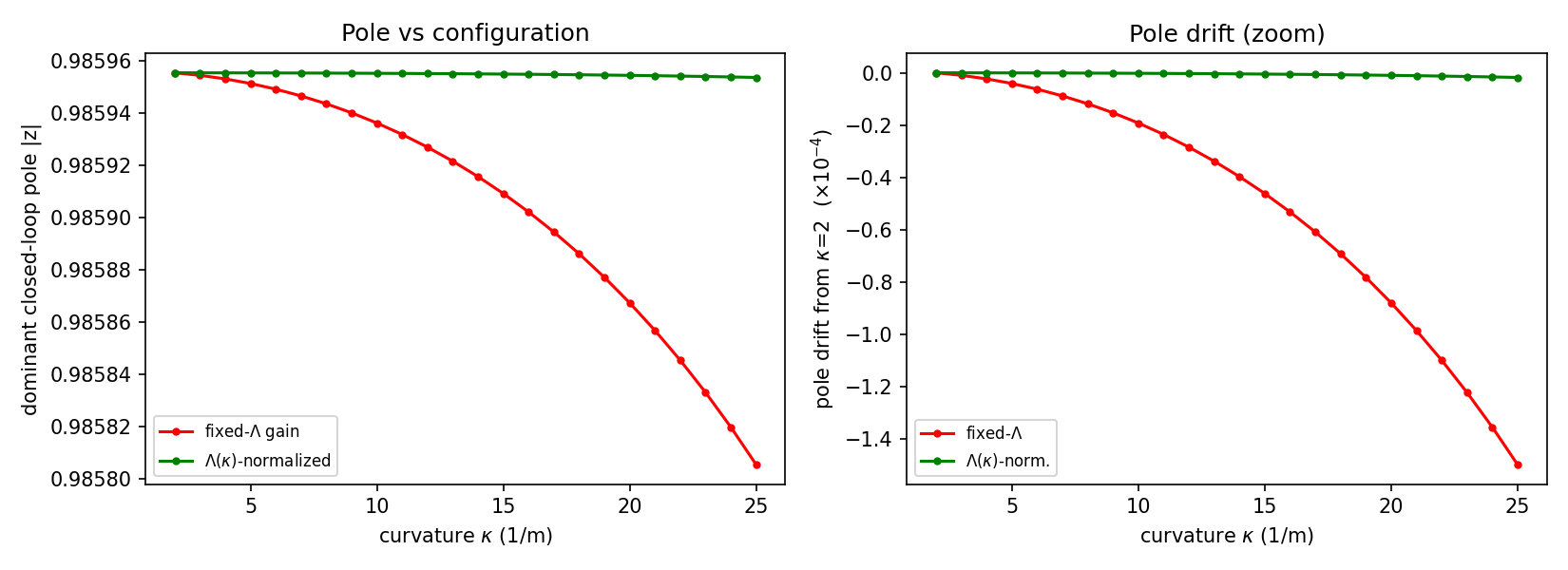}
\caption{Dominant closed-loop pole magnitude versus curvature $\kappa$ (left) and pole drift relative to $\kappa=2\,\text{m}^{-1}$ (right). The $\Lambda_n(\kappa)$-normalized gain (green) holds the pole spread to ${\sim}2\times10^{-6}$ across the $1.4\times$ inertia variation, ${\sim}80\times$ tighter than the fixed-$\Lambda_n$ gain (red)---confirming that the scalar operational-space inertia normalization (a closed-form gain scaling, equivalent here to per-configuration redesign) is what makes the tip compliance configuration-invariant.}
\label{fig:poles}
\end{figure}

\emph{(2) Hard constraint via the QP.} For a 3\,mm error the unconstrained step demands $|F_{\text{mpc}}| = 1.99$\,N; solving the condensed QP with 
OSQP under a 0.5\,N limit returns $|F_{\text{mpc}}| = 0.50$\,N (constraint respected), and reproduces the closed-form solution to $10^{-3}$ when the 
constraint is inactive. This verifies that the actuator/safety limits are genuinely enforced by the QP, with the closed-form step of~\eqref{eq:qp} 
recovered when no constraint binds.

\emph{(3) Real-time feasibility.} The unconstrained closed-form step runs in ${\approx}0.9\,\mu$s and the warm-started OSQP solve in ${\approx}28\,\mu$s 
per step on a desktop CPU---both far inside the 2\,ms (500\,Hz) budget, supporting the real-time claim with margin for embedded hardware. (Embedded-target 
timing remains future work.)

\emph{(4) Approach-velocity safety qualification.} A constant-velocity approach sweep with the force-constrained controller on the current MuJoCo compliant 
plant remains below $F_{\text{safe}}$ across the tested approach speeds; the compliant catheter rides with the wall and the QP caps the controller-induced 
component. The qualification is nevertheless important for the safety claim: \textbf{the hard constraint bounds \emph{controller-induced} force, 
but a fast impact can inject an \emph{environment-induced} $b_t\dot y$ transient that no input limit can prevent within one sample.} The damping-only 
bound $v < F_{\text{safe}}/b_t = 12.5$\,mm/s is therefore a planning-scale guide, not a controller certificate. Safe operation still requires 
velocity-limited approach or contact-onset detection with reference ramping; the cardiac tests in Section~\ref{sec:cardiac} show that the present 
compliant MuJoCo plant stays below the bound under the tested wall motions.

\subsection{Robustness to Unmodeled Tendon Hysteresis}\label{sec:mujoco}

The offset-free integrator above rejects the slowly-varying model-reduction residual; we separately stress it with an \emph{unmodeled} Coulomb 
tendon-sheath friction added to the MuJoCo plant (sign-dependent hysteresis, $0.3$\,N), with the controller still tuned on the nominal frictionless 
model. On a free-space sinusoidal tracking task, the offset-free state cuts the tracking RMS from $0.27$\,mm to $0.011$\,mm with \emph{no} friction 
(a 96\% reduction, consistent with the headline benchmark), but with the $0.3$\,N Coulomb friction it cuts it only from $0.81$\,mm to $0.47$\,mm: 
the constant-bias internal model trivially absorbs the DC offset, but the integrating estimate lags each motion reversal, leaving a residual 
hysteresis error the $\dot d=0$ model cannot cancel. This is the genuine cost the scalar constant-$F_{\text{fric}}$ assumption hid, and---together 
with the configuration dependence of $J_n(\kappa)$---it motivates a hysteresis-aware disturbance model (Section~\ref{sec:conclusion}, future work 
(vi)) that augments the integrating state to capture the sign-dependent component. The qualitative safety conclusion is unaffected: the hard 
constraint, not the estimate, bounds the contact force.

\section{Conclusion}\label{sec:conclusion}

Rather than viewing steerable catheter control as an impedance-control problem, this work formulates it as an interaction-dynamics regulation problem. The catheter--tissue interaction is first reduced to the correct scalar tip-normal state, then normalized into configuration-invariant linear dynamics, and finally regulated by predictive optimization with disturbance augmentation and hard interaction constraints. Classical impedance appears as the unconstrained, disturbance-free special case; the clinically relevant controller departs from it when offset-free motion regulation, tendon limits, curvature limits, and contact-force safety compete.

A MuJoCo distributed-compliance simulation of an eight-link tendon-driven catheter shows that disturbance augmentation cuts free-space approach error by 90\%, but the decisive result is interaction safety: only the force-constrained predictive interaction-dynamics controller delivers both accurate tracking and the 0.5\,N contact-force bound (peak 0.47\,N), whereas the unconstrained offset-free controller drives to 0.60\,N. We characterized this honestly: on the realistic compliant plant the unconstrained overshoot is milder than a stiff lumped model predicts, but offset-free motion regulation and contact-force safety remain in tension under stiff contact, and the hard interaction constraint is what resolves it. The resulting framework provides a unified foundation for compliant catheter motion, constraint handling, and safety-critical tissue interaction, and is extensible to multi-segment catheters, surgical robotics, rehabilitation, whole-body control, and dexterous manipulation.

Our future work includes: (i)~hardware validation on a tendon-actuated catheter with EM tracking; 
(ii)~FBG-based force estimation as a direct disturbance channel; 
(iii)~cardiac-phase-aware periodic disturbance models; 
(iv)~extension to multi-segment catheters with $m$ independent tendons (an $m$-DOF version with configuration-varying operational-space inertia); 
(v)~energy-budget augmentation toward certified passivity during aggressive contact; 
and (vi)~a hysteresis-aware disturbance model that augments the integrating state to capture the sign-dependent (Coulomb) component of 
tendon-sheath friction, which the constant-bias model rejects only partially (Section~\ref{sec:mujoco}).

\appendices

\section{Force-Regulation Mode and Its Cardiac-Motion Limit}\label{app:forcereg}

The safety contribution of this paper \emph{bounds} the predicted controller-induced contact force; the setpoint controllers 
of~\cite{jolaei2020,kesner2014} \emph{regulate} contact force. To compare like-for-like we add a force-regulation mode to the 
same controller on the MuJoCo plant: during free-space approach it tracks position as before, and during contact it regulates 
a sensor-free contact-force estimate (the commanded tip force beyond the modeled elasticity, $\hat F_c = J_n|T| - k_{\text{eff}}|y|$) 
to a clinical target of $0.2$\,N via an integral law on the tendon command, while the hard predicted-force bound \eqref{eq:force_constraint} 
remains active. This mode is \emph{not} a claimed contribution---the hardware results of~\cite{jolaei2020,kesner2014} (30--80\,mN) are 
better---but it characterizes what the framework does and does not provide. Two conditions are run over the contact-hold window:
\begin{itemize}
\item \emph{Idealized (slowly varying disturbance):} force RMSE \textbf{${\sim}140$\,mN} about the 0.2\,N target. With the offset-free 
position drive propagated through the horizon, the position-tracking and force-regulation objectives compete (the controller simultaneously 
tries to reach the penetrating position target and hold 0.2\,N), so the simple integral law no longer regulates tightly; recovering low 
force RMSE requires re-tuning the force-regulation gain or an explicit position/force task hierarchy (future work). The hard predicted-force 
bound is unaffected.
\item \emph{Realistic (1\,Hz cardiac wall motion, 0.3\,mm amplitude, plus 0.2\,mm position noise):} force RMSE is \textbf{${\sim}170$\,mN}. 
The zero-order-hold $\dot d = 0$ model cannot anticipate the periodic motion (Section~\ref{sec:ddot}), so the estimate lags. This is \emph{worse} 
than the reported hardware controllers and identifies the cardiac-phase-aware periodic disturbance model (Section~\ref{sec:ddot}, Option~B) 
as \textbf{essential} future work for force-regulation use---not merely an enhancement.
\end{itemize}

In both force-regulation conditions the \emph{peak} measured force stays under $F_{\text{safe}}$ (${\sim}0.35$\,N idealized, ${\sim}0.43$\,N 
with cardiac motion): unlike the stiff reduced model---where the integral force law drove the command toward the cap and an environment-induced 
$b_t\dot y$ impact transient pushed the peak past $F_{\text{safe}}$---the compliant catheter closes on the wall gently, so the regulation error 
shows up as a poorly-\emph{regulated} (not unsafe) force. The hard predicted-force bound caps the \emph{controller-induced} command throughout, 
and the position-tracking \emph{safety} mode holds the bound even more comfortably under the same cardiac motion (Section~\ref{sec:cardiac}, 
Table~\ref{tab:cardiac}).

\section{Derivation of the LPV Stability Margin $\rho^\star$}\label{app:lpv}

This appendix proves Proposition~\ref{prop:lpv} in closed form, replacing the ``computing the spectral radius'' assertion with an explicit Jury 
analysis. After the exact ZOH discretization \eqref{eq:disc} the scalar error plant has the \emph{constant}
\begin{equation*}
A_d=\begin{bmatrix}1&\Delta t\\0&1\end{bmatrix},\qquad
B_d(\Lambda_n)=-\frac1{\Lambda_n}\,b,\quad b=\begin{bmatrix}\tfrac12\Delta t^2\\[1pt]\Delta t\end{bmatrix}.
\end{equation*}
A gain $K=[k_1,k_2]$ synthesized at $\Lambda_{n,\text{ref}}=1$ but acting where the true inertia is $\Lambda_{n,\text{true}}$ gives, 
using $B_d(\Lambda_{n,\text{true}})=(\Lambda_{n,\text{ref}}/\Lambda_{n,\text{true}})\,B_d(\Lambda_{n,\text{ref}})=\rho\,B_d(\Lambda_{n,\text{ref}})$ 
with $\rho=\Lambda_{n,\text{ref}}/\Lambda_{n,\text{true}}$,
\begin{align}
A_{cl}(\rho)
&=A_d-\rho\,B_d(1)\,K=A_d+\rho\,bK \nonumber\\
&=\begin{bmatrix}
1+\tfrac12\rho\Delta t^2k_1 &
\Delta t+\tfrac12\rho\Delta t^2k_2\\[2pt]
\rho\Delta t\,k_1 &
1+\rho\Delta t\,k_2
\end{bmatrix}.
\label{eq:Aclrho}
\end{align}
Its characteristic polynomial is $\chi(z)=z^2-\operatorname{tr}(\rho)\,z+\det(\rho)$ with
\begin{align}
\operatorname{tr}(\rho)&=2+\rho\Delta t\big(k_2+\tfrac12\Delta t\,k_1\big),\label{eq:tr}\\
\det(\rho)&=1+\rho\Delta t\big(k_2-\tfrac12\Delta t\,k_1\big).\label{eq:det}
\end{align}
(The $\rho^2$ terms in $\det$ cancel because $A_d$ is unipotent and $bK$ is rank one---the same constant-$A_d$ structure exploited throughout.) 
By Jury's criterion a monic second-order $\chi$ is Schur iff
\begin{align}
|\det(\rho)|&<1, \nonumber\\
\chi(1)&=1-\operatorname{tr}+\det>0, \nonumber\\
\chi(-1)&=1+\operatorname{tr}+\det>0.
\label{eq:jury}
\end{align}
Substituting \eqref{eq:tr}--\eqref{eq:det} into the three crossings:
\begin{itemize}
\item \emph{$z=+1$:} $\chi(1)=-\rho\Delta t^2 k_1>0$ for all $\rho>0$ (since $k_1<0$)---never binds.
\item \emph{$z=-1$:} $\chi(-1)=4+2\rho\Delta t\,k_2$, which vanishes at $\rho=-2/(\Delta t\,k_2)=2/(\Delta t\,|k_2|)$ (a real pole reaching $-1$).
\item \emph{$|\det|=1$:} $\det(\rho)<1$ holds for all $\rho>0$ (as $k_2-\tfrac12\Delta t\,k_1<0$), and $\det(\rho)=-1$ at $\rho=2/\big(\Delta t\,|k_2-\tfrac12\Delta t\,k_1|\big)$.
\end{itemize}
The stability boundary is the smallest positive crossing, giving \eqref{eq:rhostar}. For the benchmark gains $k_1=-2.04\times10^3$, $k_2=-294.9$ and $\Delta t=2$\,ms,
\begin{equation*}
\frac{2}{\Delta t\,|k_2|}=3.39,\qquad \frac{2}{\Delta t\,|k_2-\tfrac12\Delta t\,k_1|}=3.42,
\end{equation*}
so $\rho^\star=3.39$ (the $z=-1$ flip binds), exactly the value returned by the brute-force spectral-radius sweep over $\rho\in[1,40]$ in \texttt{catheter\_verify.py} ($\rho_{\max}=3.39$). The margin is governed by the velocity gain $k_2$ and the step $\Delta t$ alone; with the workspace $\rho\in[0.70,1.0]$ the loop is robustly stable with $>3\times$ margin (Remark~\ref{rmk:lpv}).

\balance

\end{document}